\newtheorem{theorem}{Theorem}
\newtheorem{lemma}[theorem]{Lemma}
\newtheorem{observation}[theorem]{Observation}
\def\@endtheorem{\endtrivlist}
\newcounter{rules}
\newenvironment{Rule}{\refstepcounter{rules}\par\smallskip\noindent
\textbf{(\arabic{rules})}\quad}{} 
\newcommand{\currentrule}{\arabic{rules}}
\newcommand{\algb}{\textsc{D4PVC}}
\newcommand{\alg}[3]{\algb(#1,#2,#3)}
\newcommand{\vc}{$V_1$-disjoint cover\xspace}
\newcommand{\mvc}{minimum $V_1$-disjoint cover\xspace}
\newcommand{\component}[1]{$V_{#1}$-component}
\newcommand{\components}[1]{$V_{#1}$-components}
\newcommand{\sboundary}{S^x_b}
\newcommand{\scontained}{S^x_c}
\begin{document}

\title{An $O^*(2.619^k)$ algorithm for 4-path vertex cover}
\author{Dekel Tsur%
\thanks{
Ben-Gurion University of the Negev.
Email: \texttt{dekelts@cs.bgu.ac.il}}}
\date{}
\maketitle

\begin{abstract}
In the \emph{$4$-path vertex cover problem},
the input is an undirected graph $G$ and an integer $k$.
The goal is to decide whether there is a set of vertices $S$ of
size at most $k$ such that every path with $4$ vertices in $G$ contains at
least one vertex of $S$.
In this paper we give a parameterized algorithm for $4$-path vertex cover
whose time complexity is $O^*(2.619^k)$.
\end{abstract}

\paragraph{Keywords} graph algorithms, parameterized complexity.

\section{Introduction}
For an undirected graph $G$, an \emph{$l$-path} is a path in $G$ with $l$
vertices.
An \emph{$l$-path vertex cover} is a set of vertices $S$ such that every 
$l$-path in $G$ contains at least one vertex of $S$.
In the \emph{$l$-path vertex cover problem},
the input is an undirected graph $G$ and an integer $k$. The goal is to decide
whether there is an $l$-path vertex cover of $G$ with size at most $k$.
The problem for $l=2$ is the famous vertex cover problem.
For every fixed $l$, there is a simple reduction from vertex cover
to $l$-path vertex cover.
Therefore, $l$-path vertex cover is NP-hard for every constant $l \geq 2$.

For every fixed $l$, the $l$-path vertex cover problem has a simple
parameterized algorithm with running time $O^*(l^k)$~\cite{cai1996fixed}.
It is possible to obtain better algorithms for specific values of $l$.
Faster algorithms for 3-path vertex cover were given
in~\cite{tu2015fixed,wu2015measure,katrenivc2016faster,chang2016fixed,xiao2017kernelization,tsur2018parameterized}.
The currently fastest algorithm for 3-path vertex cover has $O^*(1.713^k)$
running time~\cite{tsur2018parameterized}.
For the 4-path vertex cover problem, Tu et al.\ gave an $O^*(3^k)$-time
algorithm~\cite{tu2016fpt}.
{\v{C}}erven{\`y} gave an $O^*(4^k)$-time algorithm for 5-path vertex
cover~\cite{cerveny2018}.

In this paper we give an algorithm for 4-path vertex cover whose time
complexity is $O^*(2.619^k)$.

\section{Preliminaries}
For a graph $G=(V,E)$ and a vertex $v\in V$,
$N(v)$ is the set of vertices that are adjacent to $v$ and $\deg(v) = |N(v)|$.
Additionally, for a set of vertices $S$,
$N(S) = (\bigcup_{v\in S} N(v))\setminus S$.
For set of vertices $S$, $G[S]$ is the subgraph
of $G$ induced by $S$ (namely, $G[S]=(S,E\cap (S\times S))$).
We also define $G-S = G[V\setminus S]$.

In order to give an algorithm for 4-path vertex cover, we use the
\emph{iterative compression method} (cf.~\cite{cygan2015parameterized}).
Consider the following problem called \emph{disjoint 4-path vertex cover}.
The input is an undirected graph $G$, an integer $k$, and a 4-path
vertex cover $V_1$ of $G$. 
The goal is to decide whether there is a 4-path vertex cover $S$ of $G$
such that $S\cap V_1 = \emptyset$ and $|S| \leq k$.
If there is an $O^*(c^k)$ algorithm for the disjoint 4-path vertex cover problem
then there is an $O^*((c+1)^k)$ algorithm for 4-path vertex cover.
Therefore, in the rest of the paper we will describe an algorithm for disjoint 
4-path vertex cover with running time $O^*(1.619^k)$.
We will assume that for an instance $G,V_1,k$ of the disjoint 4-path vertex
cover problem, $G[V_1]$ does not contain a 4-path since in this case the
instance is trivially a no instance.

A set of vertices $S$ will be called a \emph{\vc} of $G$ if $S$ is a 4-path
vertex cover of $G$ and $S \cap V_1 = \emptyset$.
Let $V_2 = V\setminus V_1$.
Denote $N_i(v) = N(v) \cap V_i$, $\deg_i(v) = |N_i(v)|$,
and $N_i(S) = N(S) \cap V_i$.
A vertex $x\in V_1$ is called a \emph{connection vertex} if $\deg_1(x) = 0$
and $\deg_2(x) \geq 2$.
A vertex $x\in V_1$ is called a \emph{leaf} if $\deg_1(x) = 0$
and $\deg_2(x) = 1$.
A connected component of $G[V_i]$ is called a \emph{\component{i}}.
We say that a \component{2} $C$ is \emph{contained} in a connection vertex $x$
if $C \subseteq N(x)$.
We say that a connection vertex $x$ \emph{splits} a \component{2} $C$ if
$C \cap N(x) \neq \emptyset$ and $C \not\subseteq N(x)$.

A set of vertices $I$ is an \emph{independent set} if there is no edge
between two vertices of $I$. Note that a set $I$ of size~1 is an independent
set.
A graph $G=(V,E)$ is called a \emph{star} if $|V| \geq 3$ and
there is a vertex $v\in V$ such that $v$ is adjacent to all the other vertices
in the graphs and $V\setminus \{v\}$ is an independent set.
The vertex $v$ is called the \emph{center} of the star.

\section{The algorithm}
In this section we describe a branching algorithm \algb\ for solving
the disjoint 4-path vertex cover problem.
The input to the algorithm is a graph $G$, a 4-path vertex cover $V_1$ of $G$,
and an integer $k$.
We note that the size of $V_1$ can be larger than $k$.

Let $S_1,\ldots,S_t$ be subsets of $V_2$.
We say that the algorithm \emph{recurses on} $S_1,\ldots,S_t$ if for each $S_i$,
the algorithm tries to find a solution $S$ that contains $S_i$.
More precisely, the algorithm performs the following lines.
\begin{enumerate}
\item
For $i=1,\ldots,t$:
\begin{enumerate}
\item
If $\alg{G-S_i}{V_1}{k-|S_i|}$ returns `yes', return `yes'.
\end{enumerate}
\item
Return `no'.
\end{enumerate}

We now describe the reduction and branching rules of the algorithm.
The algorithm applies the first applicable rule from the following rules.
For most of the branching rules below, the branching vector is $(1,t)$ for
$t\geq 2$ and therefore the branching number is at most 1.619.
\begin{Rule}\label{xrule1}
If $k < 0$ or $k = 0$ and $G$ contains a 4-path return `no'.
\end{Rule}

\begin{Rule}\label{xrule2}
If $G$ does not contain a 4-path return `yes'.
\end{Rule}

\begin{Rule}\label{rule:no-4-path}
If $C$ is a connected component of $G$ that does not contain a 4-path return
$\alg{G-C}{V_1 \setminus C}{k}$.
\end{Rule}

\begin{Rule}\label{rule:small-component}
If $C$ is a connected component of $G$ such that $|C \cap V_2| \leq 3$, find
a \mvc $S_1$ of $G[C]$ (by enumerating all subsets of $C \cap V_2$).
Recurse on $S_1$.
\end{Rule}

\begin{Rule}\label{rule:move-to-V1}
If there is a vertex $v \in V_2$ such that either
(1) $\deg_2(v)$ = 1 and either $\deg_1(v) = 0$ or
all the vertices in $N_1(v)$ are leaves, or
(2) the \component{2} of $v$ is a triangle and $\deg_1(v) = 0$,
return $\alg{G}{V_1 \cup \{v\}}{k}$.
See Figure~\ref{fig:rule-move-to-V1}.
\end{Rule}
\begin{figure}
\centering
\includegraphics{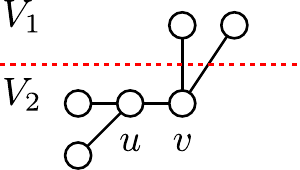}
\caption{An example for Rule~(\ref{rule:move-to-V1}).}
\label{fig:rule-move-to-V1}
\end{figure}
\begin{lemma}
Rule~(\currentrule) is correct.
\end{lemma}
\begin{proof}
To prove the lemma we show that there is a \mvc of $G$ that does not
contain $v$.
Let $S$ be a \mvc of $G$ and suppose that $v\in S$.
If $\deg_2(v) = 1$, let $u$ be the single vertex in $N_2(v)$.
Then, $S' = (S \setminus \{v\}) \cup \{u\}$ is also a minimum 4-path vertex
cover of $G$: Suppose conversely that $G-S'$ contains a 4-path.
This path must contain $v$. However, the connected component of $v$ in $G-S'$
consists of $v$ and its adjacent leaves, and thus this component does not
contain a 4-path, a contradiction.

Next consider the case in which the \component{2} $C$ of $v$ is a triangle.
There is a vertex $v_1 \in C$ such that $v_1 \notin S$
(otherwise, $S\setminus \{v\}$ is a \vc of $G$, contradicting
the assumption that $S$ is a \mvc of $G$).
Let $v_2$ be the single vertex in $C \setminus \{v,v_1\}$.
Then, $S' = (S \setminus \{v\}) \cup \{v_1\}$ is also a minimum 4-path vertex
cover of $G$: Conversely, if $G-S'$ contains a 4-path, it must contain $v$,
and therefore it must be of the form $a,b,v_2,v$
(since $N(v) = \{v_1,v_2\}$ and $v_1 \in S'$).
Then, $a,b,v_2,v_1$ is a 4-path in $G-S$, contradicting
the assumption that $S$ is a \vc of $G$.
\end{proof}

\begin{Rule}\label{rule:P4}
If there is a 4-path $P$ in $G$ such that $|P \cap V_2| = 1$,
recurse on $\{v\}$, where $v$ is the single vertex in $P \cap V_2$.
See Figure~\ref{fig:P4}.
\end{Rule}
\begin{figure}
\centering
\includegraphics{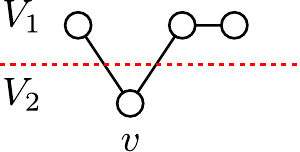}
\caption{An example for Rule~(\ref{rule:P4}).}
\label{fig:P4}
\end{figure}


\begin{Rule}\label{rule:P3}
If there is a path $P = x_1,x_2,x_3$ in $G$ such that
$|P \cap V_2| = 1$ and $|N_2(\{x_1, x_3\}) \setminus P| \geq 2$,
recurse on $\{v\}$ and $N_2(\{x_1, x_3\}) \setminus P$,
where $v$ is the single vertex in $P \cap V_2$.
See Figure~\ref{fig:P3}.
\end{Rule}
\begin{figure}
\centering
\includegraphics{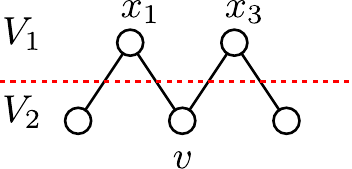}
\caption{An example for Rule~(\ref{rule:P3}).}
\label{fig:P3}
\end{figure}
\begin{lemma}
Rule~(\currentrule) is correct.
\end{lemma}
\begin{proof}
If $S$ is a \mvc of $G$ such that $v \notin S$,
then for every $u\in N_2(x_1) \setminus P$,
$u \in S$ (since $S$ needs to cover the path $u,x_1,x_2,x_3$).
Similarly, $N_2(x_3) \setminus P \subseteq S$.
\end{proof}
\begin{observation}\label{obs:P3}
If Rules (1)--(\currentrule) cannot be applied and $x$ is a connection vertex
such that there is a vertex $v \in N(x)$ with $N_1(v) \neq \{x\}$,
then $\deg_2(x) = 2$.
\end{observation}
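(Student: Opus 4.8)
The plan is to argue by contradiction: assume that none of Rules~(1)--(\ref{rule:P3}) can be applied but nevertheless $\deg_2(x) \geq 3$, and then exhibit a configuration to which Rule~(\ref{rule:P3}) applies, contradicting the hypothesis.

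First I would unpack the hypotheses. Since $x$ is a connection vertex, $\deg_1(x) = 0$ and $\deg_2(x) \geq 2$; in particular $N(x) \subseteq V_2$, so the given vertex $v \in N(x)$ lies in $V_2$, and $x$ is one of its $V_1$-neighbours, i.e.\ $x \in N_1(v)$. The assumption $N_1(v) \neq \{x\}$ then yields a second vertex $x' \in N_1(v)$ with $x' \neq x$. Note that $x' \in V_1$, that $x' \neq v$ (as $v \in V_2$), and that $x'$ is not adjacent to $x$ since $\deg_1(x) = 0$.

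Next I would point to the forbidden configuration. Consider the $3$-path $P = x', v, x$: the edges $x'v$ and $vx$ both exist, its three vertices are pairwise distinct, and $P \cap V_2 = \{v\}$, so $|P \cap V_2| = 1$. Now $N_2(x) \setminus \{v\}$ has exactly $\deg_2(x) - 1$ vertices, all lying in $V_2$ and hence distinct from the $V_1$-vertices $x'$ and $x$, so $N_2(x) \setminus \{v\} \subseteq N_2(\{x', x\}) \setminus P$. If $\deg_2(x) \geq 3$ this gives $|N_2(\{x', x\}) \setminus P| \geq 2$, so Rule~(\ref{rule:P3}) applies to $P$ — a contradiction. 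Hence $\deg_2(x) = 2$ (it cannot be smaller, since $x$ is a connection vertex).

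I do not expect a genuine obstacle here; the only points needing care are checking that $P$ is a legitimate $3$-path (distinctness of the three vertices and existence of both edges) and that the two surplus $V_2$-neighbours of $x$ are not absorbed into $P$, both of which follow immediately from $x', x \in V_1$ and $\deg_1(x) = 0$. The rest is just bookkeeping with the definitions of connection vertex, $N_i$, and $\deg_i$.
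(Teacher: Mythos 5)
Your proof is correct and follows exactly the paper's argument: the paper also applies Rule~(\ref{rule:P3}) to the path $x,v,y$ with $y\in N_1(v)\setminus\{x\}$ (your $x'$), using the $\deg_2(x)-1\geq 2$ remaining $V_2$-neighbours of $x$ as the required set. You have merely spelled out the routine verifications that the paper leaves implicit.
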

\begin{proof}
If $\deg_2(x) \geq 3$ then Rule~(\ref{rule:P3}) can be applied on the path
$x,v,y$ where $y$ is a vertex in $N_1(v) \setminus \{x\}$.
\end{proof}


\begin{Rule}\label{rule:xy}
If there is a \component{1} $C$ of size at least~2,
let $v \in N(C)$ be a vertex such that $\deg_2(v) = 1$,
and let $u$ be the single vertex in $N_2(v)$. Recurse on $\{u\}$.
See Figure~\ref{fig:xy}.
\end{Rule}
\begin{figure}
\centering
\includegraphics{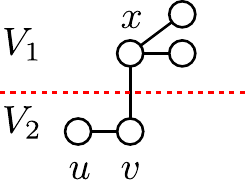}
\caption{An example for Rule~(\ref{rule:xy}).}
\label{fig:xy}
\end{figure}
\begin{lemma}
Rule~(\currentrule) is correct.
\end{lemma}
\begin{proof}
First note that due to Rule~(\ref{rule:no-4-path}) and since we assumed that
$G[V_1]$ does not contain a 4-path, at least one vertex of $C$
has a neighbor in $V_2$.
Since Rule~(\ref{rule:P4}) cannot be applied, $C$ cannot be a triangle.
Additionally, if $C$ is a star then the only vertex in $C$ with neighbors
in $V_2$ is the center of $C$.

For every vertex $w \in N_2(C)$ we have that
\begin{enumerate}
\item
$N_1(w) \subseteq C$ (since Rule~(\ref{rule:P4}) cannot be applied).
\item
$\deg_2(w) \leq 1$ (otherwise Rule~(\ref{rule:P3}) can be applied on a path
$w,x,y$ for $x,y \in C$).
\end{enumerate}
We next show that the vertex $v$ exists.
Suppose conversely that $\deg_2(w) \neq 1$ for every $w \in N(C)$.
By property~2 above, $\deg_2(w) = 0$ for every $w \in N_2(C)$.
From property~1 we obtain that the connected $C'$ in $G$ that contains $C$
is $C \cup N_2(C)$.
Since Rule~(\ref{rule:small-component}) cannot be applied,
$|N_2(C)| \geq 4$.
Since Rule~(\ref{rule:no-4-path}) cannot be applied, $C'$ contains a 4-path.
Therefore, $C$ cannot be a star (otherwise $C'$ is also a star and thus
$C$ does not contain a 4-path), and thus $|C| = 2$.
Denote the vertices of $C$ by $x$ and  $y$.
We have that $\deg_2(x) \geq 1$ and $\deg_2(y) \geq 1$ (otherwise $C'$ does not
contain a 4-path).
This implies that, without loss of generality, there is a vertex $w \in N_2(x)$
such that $|N_2(y) \setminus \{w\}| \geq 2$.
Therefore, Rule~(\ref{rule:P3}) can be applied on the path $w,x,y$,
a contradiction.
Thus, there is a vertex $v \in N_2(C)$ such that $\deg_2(v) = 1$.
Let $x \in C$ be a vertex adjacent to $v$.

Let $S$ be a \mvc of $G$ and suppose that $u \notin S$.
Due to the path $u,v,x,y$, where $y$ is some vertex from $C\setminus \{x\}$,
we have that $v \in S$.
Define $S' = (S \setminus \{v\}) \cup \{u\}$.
We will show that $S'$ is a \mvc of $G$.
Since $|S'| \leq |S|$, it suffices to show that $S'$ is a \vc of $G$.

We consider two cases. In the first case assume that $\deg_2(y) = 0$ for every
$y\in C \setminus \{x\}$.
Therefore, $N(v) = \{x,u\}$.
Suppose conversely that $G-S'$ contains a 4-path.
This path must contain $v$, and since $N(v) = \{x,u\}$ and $u \in S'$,
it follows that the path is of the form $a,b,x,v$, where
$a,b \notin C$ (as $N(y) = \{x\}$ for every $y \in C\setminus \{x\}$).
Therefore, $a,b,x,y$ is a 4-path in $G-S$,
where $y$ is some vertex from $C\setminus \{x\}$, a contradiction.
Therefore, $S'$ is a \vc of $G$.

In the second case suppose that $\deg_2(y) \geq 1$
for some $y \in C\setminus \{x\}$.
Since $x,y \in C$ have neighbors in $V_2$, $C$ cannot be a star, so
it follows that $C = \{x,y\}$.
We claim that $N_2(x) = N_2(y) = \{v\}$.
If $N_2(y) \setminus \{u, v\} \neq \emptyset$, Rule~(\ref{rule:P3}) can be
applied on the path $v,x,y$, a contradiction.
Thus, $N_2(y) \subseteq \{u,v\}$.
Similarly, $N_2(x) \subseteq \{u,v\}$
(otherwise, Rule~(\ref{rule:P3}) can be applied on the path $v',y,x$, where
$v' \in N_1(y)$, a contradiction).
If $y$ is adjacent to $u$ then by property~2 above
$\deg_2(u) \leq 1$ and therefore $N_2(u) = \{v\}$.
We also have that $N_1(u) \subseteq \{x,y\}$ and $N_1(v) \subseteq \{x,y\}$
(property~1).
Thus, the connected component of $x$ is $x,y,u,v$, contradicting the
assumption that Rule~(\ref{rule:small-component}) cannot be applied.
It follows that $N_2(y) = \{v\}$.
Due to symmetry, we also obtain that $N_2(x) = \{v\}$.

Suppose conversely that $G-S'$ contains a 4-path.
This path must contain $v$. However, the connected component of $v$ in $G-S'$
is $v,x,y$ and it does not contain a 4-path, a contradiction.
Therefore, $S'$ is a \vc of $G$.
\end{proof}
\begin{observation}
If Rules (1)--(\currentrule) cannot be applied, $V_1$ is an independent set.
\end{observation}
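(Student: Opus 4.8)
The plan is to prove the contrapositive by exhibiting an applicable rule. Suppose that Rules (1)--(\currentrule) cannot be applied, but $V_1$ is not an independent set. Then $G[V_1]$ contains an edge, so there is a \component{1} $C$ with $|C| \geq 2$. I would then observe that this is precisely the situation handled by Rule~(\ref{rule:xy}), and that the preconditions of that rule are met under the current hypotheses, which gives a contradiction.

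The only thing that needs justification is that the vertex $v \in N(C)$ with $\deg_2(v) = 1$ required by Rule~(\ref{rule:xy}) actually exists. But this is exactly what the first part of the correctness proof of Rule~(\ref{rule:xy}) establishes, and --- crucially --- that sub-argument uses only the inapplicability of Rules (1)--(7) (it never refers to Rule~(\ref{rule:xy}) itself), so there is no circularity. Hence I would simply invoke it: since Rules (1)--(7) cannot be applied and $|C| \geq 2$, there is a vertex $v \in N(C)$ with $\deg_2(v) = 1$, and therefore Rule~(\ref{rule:xy}) is applicable, contradicting the assumption that Rules (1)--(\currentrule) cannot be applied.

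For completeness I might recall the key steps of that existence argument in one sentence: inapplicability of Rule~(\ref{rule:no-4-path}), together with $G[V_1]$ being $4$-path-free, forces $C$ to have a neighbor in $V_2$; inapplicability of Rules~(\ref{rule:P4}) and~(\ref{rule:P3}) gives $N_1(w) \subseteq C$ and $\deg_2(w) \leq 1$ for every $w \in N_2(C)$; and if no such $w$ had $\deg_2(w) = 1$, then the connected component of $C$ in $G$ would be $C \cup N_2(C)$ with $|N_2(C)| \geq 4$ (by inapplicability of Rule~(\ref{rule:small-component})) and containing a $4$-path (by inapplicability of Rule~(\ref{rule:no-4-path})), which after a short case analysis contradicts the inapplicability of Rule~(\ref{rule:P3}). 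I expect no real obstacle here: the statement is an immediate consequence of the analysis already carried out for Rule~(\ref{rule:xy}); the only care needed is to point out that this analysis does not depend circularly on Rule~(\ref{rule:xy}) itself.
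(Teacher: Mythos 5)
Your proof is correct and matches the paper's (implicit) reasoning exactly: the paper states this observation without proof precisely because, as you note, the existence of the required vertex $v$ is established in the correctness proof of Rule~(\ref{rule:xy}) using only Rules (1)--(7), so any edge in $G[V_1]$ would make Rule~(\ref{rule:xy}) applicable. Your explicit check that there is no circularity is a worthwhile addition, not a deviation.
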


\begin{Rule}\label{rule:two-connection-vertices}
If a vertex $v \in V_2$ is adjacent to two connection vertices $x_1,x_2$,
return $\alg{G'}{V_1}{k}$, where $G'$ is the graph obtained from $G$
by deleting the edge $(v,x_2)$.
See Figure~\ref{fig:two-connection-vertices}.
\end{Rule}
\begin{figure}
\centering
\includegraphics{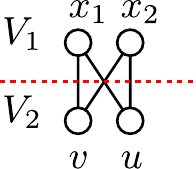}
\caption{An example for Rule~(\ref{rule:two-connection-vertices}).}
\label{fig:two-connection-vertices}
\end{figure}
\begin{lemma}
Rule~(\currentrule) is correct.
\end{lemma}
\begin{proof}
Since Rule~(\ref{rule:P3}) cannot be applied, there is a vertex $u \in V_2$
such that $N(x_1) = N(x_2) = \{v,u\}$
(otherwise, $|N(\{x_1,x_2\}) \setminus \{v\}| \geq 2$, so
Rule~(\ref{rule:P3}) can be applied on the path $x_1,v,x_2$).
We now show that a set of vertices $S$ is a \vc of $G$ if and only if
$S$ is a \vc of $G'$.
Since $G'$ is a subgraph of $G$,
if $S$ is a \vc of $G$ then $S$ is also a \vc of $G'$.
To show the second direction of the claim, let $S$ be a \vc of $G'$.
Suppose conversely that $S$ is not a \vc of $G$, namely, there
is a 4-path $P$ in $G-S$. The path $P$ must pass through the edge $(v,x_2)$.
Therefore, $v \notin S$.
Since $S$ must cover the path $v,x_1,u,x_2$, we have $u \in S$.
The fact that $N(x_1) = N(x_2) = \{v,u\}$ implies that the path $P$ is of the
form $a,b,v,x_2$, where $a \neq x_1$ and $b \neq x_1$.
Therefore, $a,b,v,x_1$ is a 4-path in $G-S'$, a contradiction.
We conclude that $S$ is a \vc of $G$.
\end{proof}
\begin{observation}\label{obs:two-connection-vertices}
If Rules (1)--(\currentrule) cannot be applied,
every vertex $v\in V_2$ is adjacent to at most one connection vertex.
\end{observation}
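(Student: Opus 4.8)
The plan is to derive the statement immediately from the non-applicability of Rule~(\ref{rule:two-connection-vertices}), arguing by contradiction. Assume Rules (1)--(\currentrule) cannot be applied to the current instance $G, V_1, k$, and suppose, for the sake of contradiction, that some vertex $v \in V_2$ is adjacent to two \emph{distinct} connection vertices $x_1$ and $x_2$. Then the precondition of Rule~(\ref{rule:two-connection-vertices}) is satisfied for this very choice of $v$, $x_1$, $x_2$, so Rule~(\ref{rule:two-connection-vertices}) is applicable to $G, V_1, k$ --- contradicting the standing assumption. Hence no vertex of $V_2$ is adjacent to two or more connection vertices, i.e.\ every vertex of $V_2$ is adjacent to at most one connection vertex, which is the claim.

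I do not expect any genuine obstacle here. In contrast to the preceding observations, such as Observation~\ref{obs:P3}, whose proofs must exhibit a concrete forbidden configuration --- typically a specific short path on which an earlier branching or reduction rule would fire --- the present statement is literally the negation of the pattern that triggers Rule~(\ref{rule:two-connection-vertices}). So the argument is a one-line contrapositive: the rule's ``if'' clause not holding for the instance is by definition the assertion that there is no vertex of $V_2$ adjacent to two connection vertices.

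The only point that warrants a moment's care is purely one of reading: ``adjacent to two connection vertices $x_1,x_2$'' in the rule is to be understood as two distinct such vertices, which is exactly what ``at most one'' in the observation rules out. Nothing about the edge deletion performed by Rule~(\ref{rule:two-connection-vertices}), nor about whether $x_2$ is still a connection vertex in the graph $G'$ produced by the rule, plays any role in the observation itself; those considerations belong to the correctness lemma of the rule, which is already in hand. Consequently the proof can be stated in a couple of sentences, following the same template used to obtain the earlier ``if Rules $(1)$--$(\cdot)$ cannot be applied'' observations from the rules immediately preceding them.
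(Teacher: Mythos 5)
Your proof is correct and matches the paper's treatment: the paper states Observation~\ref{obs:two-connection-vertices} without any proof precisely because it is the literal negation of the triggering condition of Rule~(\ref{rule:two-connection-vertices}), which is exactly the one-line contrapositive you give.
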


\begin{Rule}\label{rule:boundary-vertex}
If $x$ is a connection vertex that splits a \component{2} $C$
such that there is a vertex $u \in C \cap N(x)$ for which
$|N_2(u) \setminus N(x)| \geq 2$, recurse on $\{u\}$ and
$N_2(u) \setminus N(x)$.
See Figure~\ref{fig:boundary-vertex}.
\end{Rule}
\begin{figure}
\centering
\includegraphics{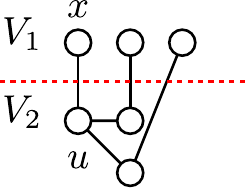}
\caption{An example for Rule~(\ref{rule:boundary-vertex}).}
\label{fig:boundary-vertex}
\end{figure}
\begin{lemma}
Rule~(\currentrule) is correct.
\end{lemma}
\begin{proof}
Let $S$ be a \vc of $G$. If $u \in S$ we are done.
Otherwise, we will show that $N_2(u) \setminus N(x) \subseteq S$.
Fix $v \in N_2(u) \setminus N(x)$.
Every \component{2} of size at least~3 is either a star or a triange.
Therefore, $C$ is either a star whose center is $u$ or a triangle.
In the former case we have that $\deg_2(v) = 1$.
Since Rule~(\ref{rule:move-to-V1}) cannot be applied, $\deg_1(v) \geq 1$.
Choose $y \in N_1(v)$. By definition, $v$ is not adjacent to $x$,
hence $x \neq y$.
The set $S$ must contain a vertex of the path $x,u,v,y$. Due to the assumption
that $u \notin S$, we conclude that $v \in S$.
Since this is true for every $v \in N_2(u) \setminus N(x)$, we obtain that
$N_2(u) \setminus N(x) \subseteq S$.
\end{proof}

If Rules (1)--(\currentrule) cannot be applied and $x$ is a connection vertex
that splits a \component{2} $C$, there is a unique vertex $v \in C$ such
that $v \notin N(x)$ and $N_2(v) \cap N(x) \neq \emptyset$.
This vertex will be called the \emph{boundary vertex} of $C$
with respect to $x$.

\begin{Rule}\label{rule:contained-connected-component}
If $C$ is a \component{2} of size at least 3 that is contained in
a connection vertex $x$,
choose $v_1,v_2,v_3\in C$ such that $v_1,v_2,v_3$ is a path
and recurse on $\{v_2\}$.
See Figure~\ref{fig:contained-connected-component}.
\end{Rule}
\begin{figure}
\centering
\includegraphics{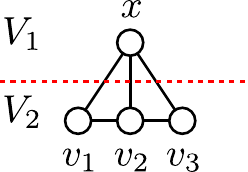}
\caption{An example for Rule~(\ref{rule:contained-connected-component}).}
\label{fig:contained-connected-component}
\end{figure}
\begin{lemma}
Rule~(\currentrule) is correct.
\end{lemma}
\begin{proof}
Let $S$ be a \mvc of $G$ and suppose that $v_2 \notin S$.
Due to the path $x,v_1,v_2,v_3$, $S$ contains either $v_1$ or $v_3$.
Without loss of generality assume that $v_1 \in S$.
Then, $S' = (S \setminus \{v_1\}) \cup \{v_2\}$ is a \mvc of $G$:
Suppose conversely that there is a 4-path $P$ in $G-S'$.
From Observation~\ref{obs:P3}, $N_1(v_i) = \{x\}$ for all $i$.
The \component{2} $C$ is either a star whose center is $v_2$ or a
triangle.
Therefore, $P$ is of the form $a,b,x,v_1$, $a,b,x,v_3$, or $a,x,v_1,v_3$
(and $a,b \notin \{v_1,v_3\}$).
In the former case $a,b,x,v_2$ is a path in $G-S$ and in the latter case
$a,x,v_2,v_3$ is a path in $G-S$, contradicting the assumption that $S$
is a \vc of $G$.
Therefore, $S'$ is a \mvc of $G$.
\end{proof}

\begin{Rule}\label{rule:triangle}
If $C$ is a \component{2} which is a triangle,
let $x$ be a connection vertex that splits $C$,
and let $v$ be the boundary vertex of $C$ with respect to $x$.
Recurse on $\{v\}$ and $C \setminus \{v\}$.
See Figure~\ref{fig:triangle}.
\end{Rule}
\begin{figure}
\centering
\includegraphics{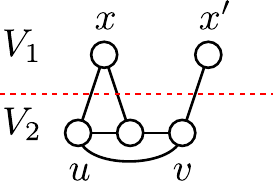}
\caption{An example for Rule~(\ref{rule:triangle}).}
\label{fig:triangle}
\end{figure}
\begin{lemma}
Rule~(\currentrule) is correct.
\end{lemma}
\begin{proof}
We first show that $x$ exists.
Since Rule~(\ref{rule:small-component}) cannot be applied, there is
a vertex in $C$ that is adjacent to a connection vertex $x$.
The vertex $x$ splits $C$ (otherwise, $C$ is contained in $x$, contradicting
the assumption that Rule~(\ref{rule:contained-connected-component}) cannot be
applied).
Therefore, $x$ exists.
Since Rule~(\ref{rule:boundary-vertex}) cannot be applied, $|C \cap N(x)| = 2$.
Denote $C\cap N(x) = \{u, u'\}$.

Since Rule~(\ref{rule:move-to-V1}) cannot be applied, $v$ is adjacent to a
vertex $x' \in V_1$. By definition, $v$ is not adjacent to $x$ and therefore
$x \neq x'$.
Let $S$ be a \mvc of $G$ and suppose that $v \notin S$.
Due to the path $x,u,v,x'$, $S$ must contain $u$.
Similarly, $u'\in S$. Therefore, $C\setminus \{v\} \subseteq S$.
\end{proof}

\begin{Rule}\label{rule:split-1-leaves}
If $x$ is a connection vertex that splits exactly one \component{2} $C$
and at least one of the vertices in $N(x)$ is adjacent to a leaf,
recurse on $\{u\}$,
where $u \in N(x)$ is a vertex that is adjacent to the boundary
vertex of $C$ with respect to $x$.
See Figure~\ref{fig:split-1-leaves}.
\end{Rule}
\begin{figure}
\centering
\includegraphics{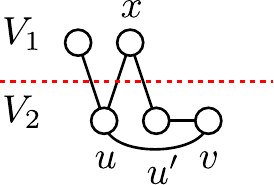}
\caption{An example for Rule~(\ref{rule:split-1-leaves}).}
\label{fig:split-1-leaves}
\end{figure}
\begin{lemma}
Rule~(\currentrule) is correct.
\end{lemma}
\begin{proof}
By Observation~\ref{obs:P3}, $\deg(x) = 2$.
Denote $N(x) = \{u, u'\}$ and let $v$ be the boundary vertex of $C$ with respect
to $x$.
From Observation~\ref{obs:two-connection-vertices}, $u$ and $u'$ are not
adjacent to connection vertices other than $x$.
Let $S$ be a \mvc of $G$ and suppose that $u \notin S$.
There is a 4-path in $G$ whose vertices are $u,x,u'$ and a leaf that is
adjacent to $u$ or $u'$.
Therefore, $u' \in S$.
We claim that the set $S' = (S \setminus \{u'\}) \cup \{u\}$ is a \mvc of $G$.
Assume conversely that $G-S'$ contains a 4-path.
This path must contain $u'$, so it is of the form $a,b,v,u'$ or $a,v,u',y$,
where $y \in N_1(u')$ (namely, $y$ is either $x$ or a leaf adjacent to $u'$).
In the former case $a,b,v,u$ is a 4-path in $G-S$, and in the latter case
$a,v,u,x$ is a 4-path in $G-S$.
This is a contradiction to the assumption that $S$ is a \vc.
Therefore, $S'$ is a \mvc of $G$.
\end{proof}

\begin{Rule}\label{rule:notIS}
If $x$ is a connection vertex that splits a \component{2} $C$ such that
$C\cap N(x)$ is not an independent set, recurse on $\{u\}$ and
$\{v\} \cup (N(x) \setminus C)$, where $u$ is the center of $C$
and $v$ is the boundary vertex of $C$ with respect to $x$.
See Figure~\ref{fig:notIS}.
\end{Rule}
\begin{figure}
\centering
\includegraphics{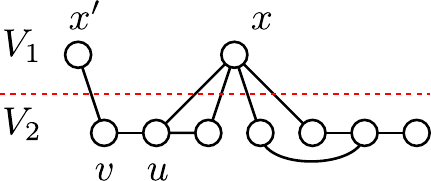}
\caption{An example for Rule~(\ref{rule:notIS}).}
\label{fig:notIS}
\end{figure}
\begin{lemma}\label{lem:notIS}
Rule~(\currentrule) is correct.
\end{lemma}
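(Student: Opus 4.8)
The plan is to show that whenever $G$ has a \vc of size at most $k$ it has a \mvc $S^{*}$ with $u\in S^{*}$ or $\{v\}\cup(N(x)\setminus C)\subseteq S^{*}$; the two recursive calls then cover both possibilities. First I would pin down the local structure. Since $C\cap N(x)$ is not independent, $|C\cap N(x)|\geq 2$, and since $x$ splits $C$ some vertex of $C$ lies outside $N(x)$, so $|C|\geq 3$; because $V_1$ is a $4$-path vertex cover, $G[V_2]$ has no $4$-path, and as Rule~(\ref{rule:triangle}) is not applicable, $C$ is a star, whose center we call $u$. Since the edges of a star are exactly its center--leaf edges, $u\in N(x)$. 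Every leaf $\ell\neq v$ of $C$ has $N_2(\ell)=\{u\}$ with $u\in N(x)$, so such an $\ell$ outside $N(x)$ would violate the uniqueness of the boundary vertex of $C$ with respect to $x$; hence all leaves of $C$ except $v$ lie in $N(x)$, i.e.\ $C\cap N(x)=C\setminus\{v\}=\{u,u_1,\dots,u_m\}$ with $m\geq 1$, and $N(x)=\{u,u_1,\dots,u_m\}\cup B$ where $B:=N(x)\setminus C$. Finally, since Rule~(\ref{rule:move-to-V1}) is not applicable to the leaf $v$, it has a neighbour $x'\in V_1$ that is not a leaf; as $V_1$ is independent, $x'$ is a connection vertex, and $x'\neq x$ because $v\notin N(x)$.

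Now I would take a \mvc $S$ of $G$. If $u\in S$ we are done, so assume $u\notin S$; then $v\in S$, because $x,u,v,x'$ is a $4$-path with $x,x'\in V_1$ (hence outside $S$) and $u\notin S$. Next I would show that $B\subseteq S$ unless $S$ can be improved to a \mvc containing $u$. If $\deg_2(x)=2$ then $B=\emptyset$, so assume $\deg_2(x)\geq 3$; then Observation~\ref{obs:P3} gives $N_1(w)=\{x\}$ for every $w\in N(x)$, so $N(u_i)=\{u,x\}$ for all $i$ and $N(w)=\{x\}\cup N_2(w)$ for all $w\in B$. If $w\in B\setminus S$, then the $4$-path $w,x,u_i,u$ forces $u_i\in S$ (as $w,x,u\notin S$), so all the $u_i$ lie in $S$, and for each $a\in N_2(w)$ the $4$-path $a,w,x,u$ forces $a\in S$, so $N_2(w)\subseteq S$.

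It then remains to handle the case $\deg_2(x)\geq 3$ with $W:=B\setminus S\neq\emptyset$, in which I would exhibit a \mvc containing $u$. Choose $W'\subseteq W$ with $|W'|=\min(|W|,m-1)$ and set $S''=(S\setminus\{u_1,\dots,u_m\})\cup\{u\}\cup W'$; then $|S''|\leq|S|$. To verify that $S''$ is a \vc, suppose some $4$-path $P$ avoids $S''$. Since $S\setminus S''=\{u_1,\dots,u_m\}$ and $P$ meets $S$, $P$ contains some $u_i$; and since $P$ avoids $u$ while $N(u_i)=\{u,x\}$, $P$ must have the form $\alpha,\beta,x,u_i$. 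Here $\beta\in N(x)\setminus\{u,u_i\}$ cannot be any $u_j$ (that would force $\alpha\in N(u_j)=\{u,x\}$), so $\beta\in B$; as $P$ avoids $S\setminus\{u_1,\dots,u_m\}$ and avoids $W'$, we get $\beta\in W\setminus W'$; hence $\alpha\in N(\beta)\setminus\{x\}=N_2(\beta)\subseteq S$, and $\alpha\notin\{u_1,\dots,u_m\}$ since $\alpha$ lies in the \component{2} of $\beta$, which is disjoint from $C$. Thus $\alpha\in S\setminus\{u_1,\dots,u_m\}\subseteq S''$, contradicting that $P$ avoids $S''$. Hence $S''$ is a \mvc with $u\in S''$.

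Combining the cases: if some \mvc contains $u$, recursing on $\{u\}$ finds a solution; otherwise every \mvc $S$ satisfies $u\notin S$, so $v\in S$, and $B\subseteq S$ (for if $B\setminus S\neq\emptyset$ then $\deg_2(x)\geq 3$ and the set $S''$ above would be a \mvc containing $u$), whence $\{v\}\cup(N(x)\setminus C)\subseteq S$ and recursing on $\{v\}\cup(N(x)\setminus C)$ finds a solution. I expect the main obstacle to be the verification that $S''$ is a \vc, i.e.\ accounting for every $4$-path whose only vertex of $S$ is one of the deleted $u_i$; this is exactly where the structural facts $N(u_i)=\{u,x\}$, $N_1(\beta)=\{x\}$ and $N_2(\beta)\subseteq S$---consequences of the earlier reduction rules and Observation~\ref{obs:P3}---become essential.
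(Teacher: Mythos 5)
Your proof is correct, and its skeleton matches the paper's: fix a \mvc $S$ with $u\notin S$, deduce $v\in S$ from the path $x',v,u,x$, and then show that either the rest of the second branch's set is forced into $S$ or $S$ can be exchanged into a \mvc containing $u$. Where you diverge is in how that dichotomy is cut and how the exchange is performed. The paper cases on whether $S\cap(C\cap N(x))=\emptyset$: if so, every $w\in N(x)\setminus C$ is forced into $S$ by a path $w,x,u',u$ with $u,u'\in C\cap N(x)$; if not, the single swap $S'=(S\setminus(C\cap N(x)))\cup\{u\}$ already yields a \mvc containing $u$, since any surviving 4-path would have the form $a,b,x,u'$ and could be rerouted to $a,b,x,u$ in $G-S$. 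You instead case on whether $N(x)\setminus C\subseteq S$, and in the negative case you remove all of $u_1,\dots,u_m$ and compensate with the padding set $W'$ of size $\min(|W|,m-1)$, which obliges you to also classify the 4-paths $\alpha,\beta,x,u_i$ with $\beta\in N(x)\setminus C$ and to invoke $N_2(\beta)\subseteq S$. Both arguments are sound; the paper's choice of case split makes the exchange a one-for-one replacement and eliminates the size bookkeeping, while yours extracts slightly more structural information (e.g.\ that $C\cap N(x)=C\setminus\{v\}$ and that a missing vertex of $N(x)\setminus C$ forces all of $N_2(w)$ and all the $u_i$ into $S$), at the cost of a longer verification. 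A small additional difference: the paper establishes $N_1(w)=\{x\}$ for every $w\in N(x)$ unconditionally, by deriving a contradiction with Rule~(\ref{rule:split-1-leaves}) when $\deg_2(x)=2$, whereas you obtain it only for $\deg_2(x)\geq 3$ from Observation~\ref{obs:P3}; your separate treatment of the $\deg_2(x)=2$ case (where $N(x)\setminus C=\emptyset$, so there is nothing further to force) correctly covers the gap.
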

\begin{proof}
If $N_1(w) \neq \{x\}$ for some $w\in N(x)$ then by Observation~\ref{obs:P3},
$\deg_2(x) = 2$.
Since $|C\cap N(x)| \geq 2$, it follows that $C$ is the only \component{2} that
$x$ splits, and thus Rule~(\ref{rule:split-1-leaves}) can be applied,
a contradiction.
Therefore, $N_1(w) = \{x\}$ for every $w\in N(x)$.

Since Rule~(\ref{rule:move-to-V1}) cannot be applied, $\deg_1(v) \geq 1$.
Let $x'$ be a vertex in $N_1(v)$.
Let $S$ be a \mvc of $G$ and assume that $u \notin S$.
Due to the path $x',v,u,x$, $S$ must contain $v$.
If $S \cap (C \cap N(x)) = \emptyset$ then $N(x) \setminus C \subseteq S$.
Otherwise, $S' = (S \setminus (C \cap N(x))) \cup \{u\}$ is a \mvc of $G$:
Conversely, if $G-S'$ contains a 4-path, this path must be of the form
$a,b,x,u'$ for $u' \in (C \cap N(x)) \setminus \{u\}$
($a,b \notin C \cap N(x)$).
This implies that $G-S$ contains a path $a,b,x,u$, a contradiction.
\end{proof}

For a connection vertex $x$,
we denote by $\sboundary$ the set containing the boundary vertex of $C$ with
respect to $x$ for every \component{2} $C$ that $x$ splits.
See Figure~\ref{fig:Sb} for an example.
\begin{figure}
\centering
\includegraphics{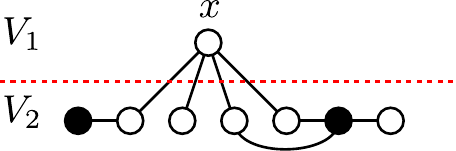}
\caption{An example for the definition of $\sboundary$.
The vertices of $\sboundary$ are marked in black.}
\label{fig:Sb}
\end{figure}

\begin{Rule}\label{rule:contains-special}
If $x$ is a connection vertex such that $|C\cap N(x)| = 1$ for every
\component{2} that $x$ splits, $x$ contains exactly one \component{2} $C'$,
and $|C'| = 2$, recurse on $N(x) \setminus C'$ and $\sboundary \cup \{u\}$,
where $u$ is a vertex in $C'$.
See Figure~\ref{fig:contains-special}.
\end{Rule}
\begin{figure}
\centering
\includegraphics{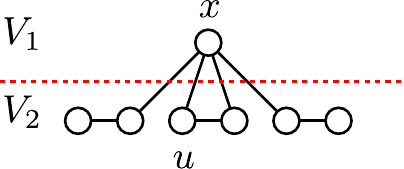}
\caption{An example for Rule~(\ref{rule:contains-special}).}
\label{fig:contains-special}
\end{figure}
\begin{lemma}
Rule~(\currentrule) is correct.
\end{lemma}
\begin{proof}
The assumption that $C'$ is the only \component{2} that is contained in $x$
implies that $x$ splits at least one \component{2} (otherwise,
by Observation~\ref{obs:two-connection-vertices} we have that
Rule~(\ref{rule:small-component}) can be applied).
Therefore, $\deg(x) \geq 3$.
By Observation~\ref{obs:P3}, $N_1(v) = \{x\}$ for every $v\in N(x)$.

Let $S$ be a \mvc of $G$.
If $S\cap C' = \emptyset$ then $N(x) \setminus C' \subseteq S$
(for every $v \in N(x)\setminus C'$ there is a 4-path $v,x,u,u'$ in $G$,
where $u'$ is the single vertex in $C'\setminus\{u\}$, and therefore $v\in S$)
and we are done.

Now assume that $S\cap C' \neq \emptyset$.
Let $S_0 = N(x) \cup \sboundary$ and
$S' = (S \setminus S_0) \cup \sboundary \cup \{u\}$.
The set $S'$ is a \vc of $G$: Assume conversely that $G-S'$ contains a 4-path.
This path must contain a vertex 
$v \in S_0 \setminus (\sboundary \cup \{u\}) = N(x) \setminus \{u\}$.
However, the connected component
of $v$ in $G-S'$ is a star (recall that $N_1(w) = \{x\}$ for every $w\in N(x)$),
a contradiction.
We will show that $|S\cap S_0|\geq |\sboundary \cup \{u\}|$ and therefore
$S'$ is a \mvc of $G$.

Let $t\geq 1$ be the number of \component{2} that $x$ splits.
By definition, $|\sboundary \cup \{u\}| = t+1$.
If $|S\cap C'| = 1$ then for every \component{2} $C$ that $x$ splits,
$S$ contains at least one vertex from $S_0 \cap C$
(namely, $S$ contains either the single vertex in $C\cap N(x)$ or
the boundary vertex of $C$ with respect to $x$).
Therefore, $|S\cap S_0| \geq t+1$.
Otherwise (if $|S \cap C'| = 2$), for every \component{2} $C$ that $x$ splits
except at most one,
$S$ contains at least one vertex from $S_0 \cap C$.
Therefore, $|S\cap S_0| \geq (t-1)+2 = t+1$.
\end{proof}
The branching vector of Rule~(\currentrule) is $(t,t+1)$, where $t \geq 1$.
Therefore, the branching number is at most 1.619.

\begin{Rule}\label{rule:contains}
If $x$ is a connection vertex that contains at least one \component{2},
recurse on $\sboundary \cup \scontained$, where
$\scontained$ is a set containing one vertex from each \component{2} of size~2
that is contained in $x$.
See Figure~\ref{fig:contains}.
\end{Rule}
\begin{figure}
\centering
\includegraphics{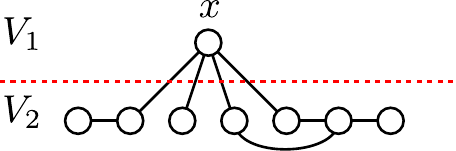}
\caption{An example for Rule~(\ref{rule:contains}).
In this example, $s_1 = 1$, $s_2 = 0$, $t_1 = 1$, and $t_2 = 1$.}
\label{fig:contains}
\end{figure}
\begin{lemma}
Rule~(\currentrule) is correct.
\end{lemma}
\begin{proof}
We first claim that $N_1(v) = \{x\}$ for every $v \in N(x)$.
Suppose conversely that $N_1(v) \neq \{x\}$ for some $v \in N(x)$.
By Observation~\ref{obs:P3}, $\deg(x) = 2$.
If $x$ does not split \components{2}, then due to
Observation~\ref{obs:two-connection-vertices},
the connected component of $x$ in $G$ consists of $\{x\} \cup N(x)$ and leaves
adjacent to vertices in $N(x)$.
Therefore, Rule~(\ref{rule:small-component}) can be applied, a contradiction.
Thus, $x$ splits at least one \component{2}.
Since $\deg(x) = 2$ and $x$ contains at least one \component{2}, it follows
that $x$ splits exactly one \component{2}.
By Observation~\ref{obs:two-connection-vertices}, $v$ is adjacent to a leaf.
Therefore, Rule~(\ref{rule:split-1-leaves}) can be applied, a contradiction.
Hence, $N_1(v) = \{x\}$ for every $v \in N(x)$.

Let $S$ be a \mvc of $G$.
Let $S_0 = N(x) \cup \sboundary$.
The set
$S' = (S\setminus S_0) \cup (\sboundary \cup \scontained)$
is a \vc of $G$: If $G-S'$ contains a 4-path then this path must contain
a vertex $v\in N(x)$. However, the connected component of $v$ in $G-S'$ is
a star, a contradiction.
We will show that
$|S\cap S_0| \geq |\sboundary \cup \scontained |$
and therefore $S'$ is a \mvc of $G$.

Let $s_1$ (resp., $s_2$) be the number of \components{2} of size~1
(resp., size~2) that are contained in $x$.
Let $t_1$ (resp., $t_2$) be the number of \components{2} $C$
that $x$ splits and $|C\cap N(x)| = 1$ (resp., $|C\cap N(x)| \geq 2$).
By definition, $|\sboundary \cup \scontained| = s_2+t_1+t_2$.

Suppose that $s_2 = 0$. Then, from the assumption that $x$ contains at least
one \component{2}, $s_1 \geq 1$.
If for every \component{2} $C$ that $x$ splits, $S$ contains at least one
vertex from $S_0 \cap C$
then $|S\cap S_0| \geq t_1+t_2 = s_2+t_1+t_2$.
Otherwise, let $C$ be a \component{2} that $x$ splits such that
$S$ does not contain a vertex from $S_0 \cap C$.
It follows that $N(x) \setminus C \subseteq S$.
Therefore,
\[ |S\cap S_0| \geq |N(x) \setminus C| \geq s_1+(t_1+t_2-1) \geq t_1+t_2
= s_2+t_1+t_2.
\]
Now suppose that $s_2 \geq 1$. Since Rule~(\ref{rule:contains-special}) cannot
be applied, either $s_1 \geq 1$, $s_2 \geq 2$, or $t_2 \geq 1$.
If there is a \component{2} $C$ of size~2 that is contained in $x$ and
$C\cap S = \emptyset$ then $N(x)\setminus C \subseteq S$ and thus
$|S\cap S_0| \geq s_1+2(s_2 -1)+t_1+2t_2 \geq s_2+t_1+t_2$,
where the last inequality follow from the assumption that $s_2 \geq 1$ and
from the fact that either $s_1 \geq 1$, $s_2 \geq 2$, or $t_2 \geq 1$.
Otherwise, $S$ contains at least one vertex from each \component{2} of size~2
that is contained in $x$.
If there is a \component{2} $C$ of size~2 that is contained in $x$ such that
$|C\cap S| = 1$,
then for every \component{2} $C'$ that $x$ splits,
$S$ contains at least one vertex from $S_0 \cap C'$.
Therefore, $|S\cap S_0| \geq s_2+t_1+t_2$.
Otherwise (if $S$ contains the vertices of every \component{2} of size~2 that is
contained in $x$),
for every \component{2} $C'$ that $x$ splits except at most one,
$S$ contains at least one vertex from $S_0 \cap C'$.
Therefore, $|S\cap S_0| \geq 2s_2+(t_1+t_2-1) \geq s_2+t_1+t_2$.
\end{proof}

\begin{Rule}\label{rule:split-1}
If $x$ is a connection vertex that splits exactly one \component{2} $C$,
recurse on $\sboundary$.
See Figure~\ref{fig:split-1}.
\end{Rule}
\begin{figure}
\centering
\includegraphics{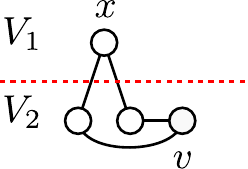}
\caption{An example for Rule~(\ref{rule:split-1}).}
\label{fig:split-1}
\end{figure}
\begin{lemma}
Rule~(\currentrule) is correct.
\end{lemma}
\begin{proof}
By Observation~\ref{obs:two-connection-vertices} and
since Rule~(\ref{rule:split-1-leaves}) cannot be applied,
$N_1(u) = \{x\}$ for every $u \in N(x)$.
Additionally, $x$ does not contain \components{2}
(due to Rule~(\ref{rule:contains})).
Denote $\sboundary = \{v\}$.
Let $S$ be a \mvc of $G$ and assume that $v \notin S$.
$S$ must contain at least one vertex from $N(x)$.
Therefore, the set $S' = (S\setminus N(x)) \cup \{v\}$ is a \mvc of $G$.
\end{proof}
\begin{observation}\label{obs:split-1}
If Rules (1)--(\currentrule) cannot be applied,
every connection vertex $x$ splits at least two \components{2}.
\end{observation}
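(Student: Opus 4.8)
The plan is to argue by contradiction via a very short case analysis on how many \components{2} the connection vertex $x$ splits. First I would unpack the definition: a connection vertex $x$ has $\deg_1(x) = 0$ and $\deg_2(x) \geq 2$, so $x$ has at least two neighbours and every neighbour of $x$ lies in $V_2$. Since the connected components of $G[V_2]$ partition $V_2$, each neighbour of $x$ belongs to a unique \component{2}, and for every \component{2} $C$ with $C \cap N(x) \neq \emptyset$ exactly one of the following holds: $C \subseteq N(x)$ (so $x$ contains $C$) or $C \not\subseteq N(x)$ (so $x$ splits $C$). In particular, since $N(x)\neq\emptyset$, the set $N(x)$ meets at least one \component{2}, and that component is either contained in $x$ or split by $x$.

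Now I would suppose for contradiction that $x$ splits at most one \component{2} and split into two cases. If $x$ splits no \component{2}, then every \component{2} meeting $N(x)$ is contained in $x$; as $N(x)\neq\emptyset$, at least one \component{2} is contained in $x$, so Rule~(\ref{rule:contains}) can be applied, contradicting the hypothesis that Rules (1)--(\currentrule) cannot be applied. If $x$ splits exactly one \component{2}, then Rule~(\ref{rule:split-1}) can be applied, again a contradiction. Hence $x$ splits at least two \components{2}.

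I do not anticipate a genuine obstacle here; the only points requiring a moment of care are that each neighbour of $x$ indeed lies in some \component{2} (immediate since $N(x)\subseteq V_2$ and $G[V_2]$ decomposes into its connected components) and that "$x$ contains at least one \component{2}" is precisely the precondition of Rule~(\ref{rule:contains}), so that rule genuinely fires in the first case, just as "$x$ splits exactly one \component{2}" is precisely the precondition of Rule~(\ref{rule:split-1}) in the second case.
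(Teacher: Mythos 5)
Your proof is correct, and it is exactly the argument the paper intends: the paper states this observation without proof precisely because it follows immediately from the inapplicability of Rule~(\ref{rule:contains}) (so no \component{2} is contained in $x$, hence, since $N(x)\subseteq V_2$ is nonempty, $x$ splits at least one) and of Rule~(\ref{rule:split-1}) (so not exactly one). Your two-case contradiction is the same reasoning, just written out.
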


\begin{Rule}\label{rule:degv-leaf}
If $x$ is a connection vertex that splits a \component{2} $C$ such that the
boundary vertex $v$ of $C$ with respect to $x$ satisfies $\deg_1(v) \geq 1$
and at least one of the vertices in $N(x)$ is adjacent to a leaf,
recurse on $\{u\}$ and $\{v\} \cup (N(x) \setminus \{u\})$ where $u$ is the
single vertex in $C \cap N(x)$.
See Figure~\ref{fig:degv-leaf}.
\end{Rule}
\begin{figure}
\centering
\includegraphics{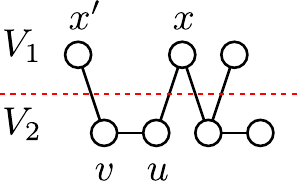}
\caption{An example for Rule~(\ref{rule:degv-leaf}).}
\label{fig:degv-leaf}
\end{figure}
\begin{lemma}
Rule~(\currentrule) is correct.
\end{lemma}
\begin{proof}
By Observation~\ref{obs:P3}, $\deg(x) = 2$.
From Observation~\ref{obs:split-1} we obtain that $|C \cap N(x)| = 1$.
Let $x'$ be a vertex in $N_1(v)$.
By definition, $v$ is not adjacent to $x$ and therefore $x \neq x'$.
Let $S$ be a \mvc of $G$ and assume that $u \notin S$.
Due to the path $x',v,u,x$ in $G$, $S$ must contain $v$.
Since at least one of the two vertices in $N(x)$ is adjacent to a leaf,
$S$ must contain the single vertex in $N(x) \setminus \{u\}$.
\end{proof}

\begin{Rule}\label{rule:degv}
If $x$ is a connection vertex that splits a \component{2} $C$ such that the
boundary vertex $v$ of $C$ with respect to $x$ satisfies $\deg_1(v) \geq 1$,
recurse on $\{u\}$ and $\sboundary$,
where $u$ is a vertex in $C \cap N(x)$,
See Figure~\ref{fig:degv}.
\end{Rule}
\begin{figure}
\centering
\includegraphics{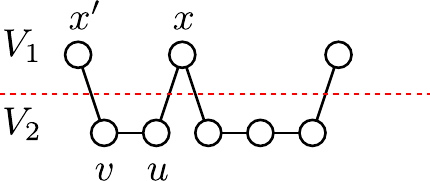}
\caption{An example for Rule~(\ref{rule:degv}).}
\label{fig:degv}
\end{figure}
\begin{lemma}\label{lem:degv}
Rule~(\currentrule) is correct.
\end{lemma}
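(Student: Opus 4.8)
The plan is to use the same template as in the preceding correctness proofs: show that if $G$ has a \vc\ of size at most $k$ then it has one containing $u$ or one containing $\sboundary$ (the converse is immediate), which is exactly correctness of the two recursive calls. Since by Observation~\ref{obs:split-1} the vertex $x$ splits at least two \components{2}, we have $|\sboundary|\ge2$, so the branching vector $(1,|\sboundary|)$ has branching number at most $1.619$.

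First I would record the relevant structure. Because $V_1$ is independent and, since Rule~(\ref{rule:degv-leaf}) is inapplicable (one may take its split component to be $C$), no vertex of $N(x)$ is adjacent to a leaf, Observation~\ref{obs:two-connection-vertices} forces $N_1(w)=\{x\}$ for every $w\in N(x)$: any other $V_1$-neighbour of $w$ would be a leaf (excluded) or a second connection vertex adjacent to $w$ (excluded). Also $x$ contains no \component{2} (Rule~(\ref{rule:contains})), so every \component{2} meeting $N(x)$ is split by $x$; and since every \component{2} of size at least $3$ is a star or triangle while Rules~(\ref{rule:triangle}) and~(\ref{rule:notIS}) are inapplicable, each such component is an edge or a star, and in both cases its boundary vertex is the unique $V_2$-neighbour of each of its vertices lying in $N(x)$. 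Let $C_1,\dots,C_t$ ($t\ge2$) be the \components{2} split by $x$, with $C_1=C$, boundary vertices $v_1,\dots,v_t$ (so $\sboundary=\{v_1,\dots,v_t\}$), and $A_j=C_j\cap N(x)\neq\emptyset$, $u\in A_1$; then $N(x)=A_1\cup\dots\cup A_t$ and $S_0:=N(x)\cup\sboundary$ is the disjoint union of the sets $A_j\cup\{v_j\}$.

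Now let $S$ be a \mvc\ of $G$ with $u\notin S$ (if $u\in S$, the first recursive call succeeds). Picking $x'\in N_1(v_1)$, which exists by hypothesis and differs from $x$ since $v_1\notin N(x)$, the $4$-path $x',v_1,u,x$ forces $v_1\in S$. The main obstacle is to show $|S\cap S_0|\ge t$, not merely $t-1$; equivalently, that $S$ meets $A_j\cup\{v_j\}$ for every $j$. This holds for $j=1$ since $v_1\in S$. For $j\ge2$, if $S$ missed $A_j\cup\{v_j\}$, then for any $w_j\in A_j$ and every $v'\in N(x)\setminus C_j$ the $4$-path $v',x,w_j,v_j$ would force $v'\in S$, giving $N(x)\setminus C_j\subseteq S$; but $j\ge2$ makes $u\in A_1\subseteq N(x)\setminus C_j$, contradicting $u\notin S$. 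So it is exactly the hypothesis $u\notin S$ that supplies the missing unit of the count.

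Finally set $S'=(S\setminus S_0)\cup\sboundary$; it is disjoint from $V_1$, contains $\sboundary$, and $|S'|\le|S|-|S\cap S_0|+|\sboundary|\le|S|\le k$. Since $S'\cap N(x)=\emptyset$ and $S'\cup N(x)=S\cup S_0\supseteq S$, a $4$-path of $G-S'$ avoiding $N(x)$ would be a $4$-path of $G-S$; hence any $4$-path of $G-S'$ meets some $w\in N(x)$. But $N_1(w)=\{x\}$ and $w$'s unique $V_2$-neighbour is the boundary vertex of its component, which lies in $\sboundary\subseteq S'$; thus in $G-S'$ the only neighbour of $w$ is $x$, and likewise for every vertex of $N(x)$, so the connected component of $w$ in $G-S'$ is the star with centre $x$ and leaf set $N(x)$ (an independent set, since each $A_j$ is), which contains no $4$-path — a contradiction. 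Hence $S'$ is a \vc\ of $G$ of size at most $k$ containing $\sboundary$, so the second recursive call succeeds.
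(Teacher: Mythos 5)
Your proof is correct and follows essentially the same route as the paper's (which is much terser): replace $S$ by $S'=(S\setminus S_0)\cup\sboundary$ with $S_0=N(x)\cup\sboundary$, and show $|S\cap S_0|\ge|\sboundary|$ because $S$ meets $S_0\cap C_j$ for every split component $C_j$. Your added detail — in particular the observation that the hypothesis $u\notin S$ is what forces $S$ to meet $S_0\cap C_j$ for the components other than $C$, and the verification that the component of $x$ in $G-S'$ is a star — fills in steps the paper leaves implicit.
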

\begin{proof}
Let $x'$ be a vertex in $N_1(v)$.
Let $S$ be a \mvc of $G$ and assume that $u \notin S$.
Due to the path $x',v,u,x$, $S$ must contain $v$.
Let $S_0 = N(x) \cup \sboundary$ and
$S' = (S\setminus S_0) \cup \sboundary$.
$S'$ is a vertex cover of $G$.
Additionally, for every \component{2} $C'$ that $x$ splits,
$S$ must contain at least one vertex from $S_0 \cap C$.
Therefore $|S\cap S_0| \geq |\sboundary|$.
Therefore, $S'$ is a \mvc of $G$.
\end{proof}
From Rule~(\ref{rule:move-to-V1}) and Rule~(\currentrule) we obtain the
following observation.
\begin{observation}\label{obs:degv}
If Rules (1)--(\currentrule) cannot be applied and $C$ is a \component{2} that
is split by a connection vertex $x$, then $C$ is a star whose center $v$
is the boundary vertex of $C$ with respect to $x$.
Additionally, $\deg_1(v) = 0$.
\end{observation}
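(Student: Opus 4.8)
The plan is to name the boundary vertex of $C$ with respect to $x$, call it $v$; this is well defined since $x$ splits $C$ and Rules (1)--(\ref{rule:boundary-vertex}) are inapplicable (this is exactly the content of the paragraph introducing boundary vertices after Rule~(\ref{rule:boundary-vertex})). Since $v \in C$, the \component{2} of $v$ is $C$. I would then read off the two asserted properties one at a time: the part ``$\deg_1(v) = 0$'' from Rule~(\ref{rule:degv}), and the star structure with $v$ as centre from Rule~(\ref{rule:move-to-V1}) together with the standard fact that large \components{2} are stars or triangles.

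Concretely, I would first argue $\deg_1(v) = 0$: otherwise $\deg_1(v) \geq 1$, so the hypothesis of Rule~(\ref{rule:degv}) holds for $x$ and $C$ and that rule would apply, a contradiction. Next I would determine $\deg_2(v)$. By the defining property of the boundary vertex, $N_2(v) \cap N(x) \neq \emptyset$, and since every $V_2$-neighbour of $v$ lies in its \component{2} $C$, in particular $v$ has a neighbour in $C \cap N(x)$, so $\deg_2(v) \geq 1$. Now, because Rule~(\ref{rule:move-to-V1}) is inapplicable and $\deg_1(v) = 0$, part~(1) of that rule forces $\deg_2(v) \neq 1$, hence $\deg_2(v) \geq 2$; consequently $v$ has at least two neighbours inside $C$, so $|C| \geq 3$. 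A \component{2} of size at least $3$ is a star or a triangle (this follows, as in the proof of Rule~(\ref{rule:boundary-vertex}), from $V_1$ being a $4$-path vertex cover, so that $G[V_2]$ contains no $4$-path). Since $\deg_1(v) = 0$, part~(2) of Rule~(\ref{rule:move-to-V1}) excludes the case that $C$ is a triangle, so $C$ is a star. Finally, in a star the centre is the unique vertex of degree at least $2$ within the star; as $\deg_2(v) \geq 2$ and $C$ is the whole \component{2} of $v$, the vertex $v$ must be the centre of $C$, which is the remaining claim.

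The argument is essentially bookkeeping: each conclusion is matched to the rule whose inapplicability forbids its negation, so I do not expect a genuine obstacle. The only point that deserves care is establishing $\deg_2(v) \geq 1$ before invoking Rule~(\ref{rule:move-to-V1}) — without it one could not upgrade $\deg_2(v) \neq 1$ to $\deg_2(v) \geq 2$ — but this is immediate from the defining property $N_2(v) \cap N(x) \neq \emptyset$ of the boundary vertex.
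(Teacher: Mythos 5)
Your proof is correct and follows exactly the route the paper intends: the paper dispenses with a formal proof, remarking only that the observation follows from Rule~(\ref{rule:move-to-V1}) and Rule~(\ref{rule:degv}), and your argument is precisely the detailed version of that remark ($\deg_1(v)=0$ from the inapplicability of Rule~(\ref{rule:degv}), then $\deg_2(v)\geq 2$ and the exclusion of the triangle case from the inapplicability of Rule~(\ref{rule:move-to-V1})). Your extra care in first establishing $\deg_2(v)\geq 1$ from the defining property of the boundary vertex is a worthwhile detail the paper leaves implicit.
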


\begin{Rule}\label{rule:large-intersection}
If $x$ is a connection vertex that splits a \component{2} $C$ such that
$|C \cap N(x)|\geq 2$, recurse on $\sboundary$.
See Figure~\ref{fig:large-intersection}.
\end{Rule}
\begin{figure}
\centering
\includegraphics{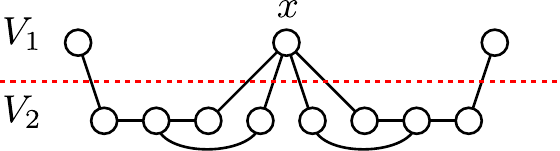}
\caption{An example for Rule~(\ref{rule:large-intersection}).}
\label{fig:large-intersection}
\end{figure}
\begin{lemma}
Rule~(\currentrule) is correct.
\end{lemma}
\begin{proof}
Since $|C \cap N(x)|\geq 2$ and $x$ splits at least one \component{2}
other than $C$ (Observation~\ref{obs:split-1}), we have that $\deg(x) \geq 3$.
By Observation~\ref{obs:P3}, $N_1(v) = \{x\}$ for every $v\in N(x)$.

Let $S$ be a \mvc of $G$.
Define $S_0 = N(x) \cup S_b$ and $S' = (S\setminus S_0) \cup \sboundary$.
The connected component of $x$ in $G-S'$ is a star.
Therefore, $S'$ is a \vc of $G$.
We will show that $|S\cap S_0| \geq |\sboundary|$
and therefore $S'$ is a \mvc of $G$.

Let $t_1$ (resp., $t_2$) be the number of \components{2} $C$
that are split by $x$ and $|C\cap N(x)| = 1$
(resp., $|C\cap N(x)| \geq 2$).
Note that $t_2 \geq 1$.
By definition, $|\sboundary| = t_1+t_2$.

Suppose first that for every \component{2} $C'$ such that $x$ splits $C'$
and $|C' \cap N(x)| = 1$ we have that $S$ contains at least one vertex from
$S_0 \cap C'$.
For every \component{2} $C'$ such that $x$ splits $C'$ and
$|C' \cap N(x)| \geq 2$, $S$ must contain at least one vertex from
$S_0 \cap C'$.
Therefore, $|S \cap S_0| \geq t_1+t_2$.

Now, suppose that there is a \component{2} $C'$ such that $x$ splits $C'$,
$|C' \cap N(x)| = 1$, and $S$ does not contain a vertex from $S_0 \cap C'$.
In this case we have $N(x)\setminus C' \subseteq S$.
Therefore, $|S\cap S_0| \geq t_1-1+2 t_2 \geq t_1+t_2$.
\end{proof}

\begin{Rule}\label{rule:split-3}
If $x$ is a connection vertex that splits at least three \components{2},
recurse on $\sboundary$ and on
$S_i = (C_i \setminus(N(x)\cup \sboundary)) \cup (N(x)\setminus C_i)$ for every
$i \leq t$,
where $C_1,\ldots,C_t$ are the \components{2} that $x$ splits.
See Figure~\ref{fig:split-3}.
\end{Rule}
\begin{figure}
\centering
\includegraphics{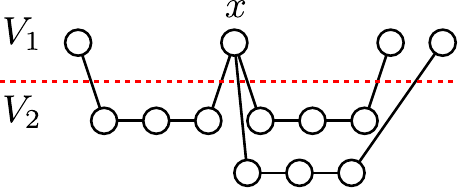}
\caption{An example for Rule~(\ref{rule:split-3}).}
\label{fig:split-3}
\end{figure}
\begin{lemma}
Rule~(\currentrule) is correct.
\end{lemma}
\begin{proof}
Let $S$ be a \mvc of $G$.
Let $S_0 = N(x) \cup \sboundary$ and $S' = (S\setminus S_0) \cup \sboundary$.
By Observation~\ref{obs:P3}, $N_1(v) = \{x\}$ for every $v\in N(x)$,
and it follows that $S'$ is a \vc of $G$.
If $S$ contains at least one vertex from $S_0 \cap C_i$ for all $i \leq t$
then $|S\cap S_0| \geq t = |\sboundary|$.
It follows that $S'$ is a \mvc of $G$.

Now suppose that there is an index $i$ such that $S$ does not contain a vertex
from $S_0 \cap C_i$. Then, $S$ must contain every vertex in
$C_i \setminus S_0$.
Additionally, $S$ must contain every vertex in $N(x) \setminus C_i$.
Therefore, $S_i \subseteq S$.
\end{proof}
The branching vector of Rule~(\currentrule) is at least $(t,t,\ldots,t)$,
where the value $t$ is repeated $t+1$ times.
The worst case is when $t = 3$, and the branching number of the vector
$(3,3,3)$ is at most 1.588.

\begin{Rule}\label{rule:large-C}
If $x$ is a connection vertex that splits the \components{2} $C$ and $C'$
such that $|C'|\geq 4$ and the vertices in $N(x)$ are not adjacent to leaves,
recurse on $S_1 = \{v,v'\}$, $S_2 = (C \setminus \{u,v\}) \cup \{u'\}$, and
$S_3 = \{u\} \cup (C' \setminus \{u',v'\})$,
where $u$ (resp., $u'$) is the single vertex in $C \cap N(x)$
(resp., $C' \cap N(x)$), and
$v$ (resp., $v'$) is the boundary vertex of $C$ (resp., $C'$) with respect to $x$.
See Figure~\ref{fig:large-C}.
\end{Rule}
\begin{figure}
\centering
\includegraphics{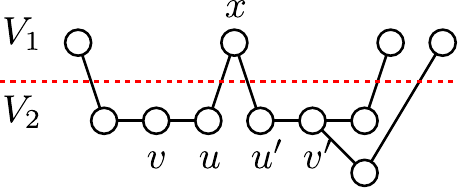}
\caption{An example for Rule~(\ref{rule:large-C}).}
\label{fig:large-C}
\end{figure}
\begin{lemma}
Rule~(\currentrule) is correct.
\end{lemma}
\begin{proof}
Let $S$ be a \mvc of $G$.
Due to the paths $x,u,v,w$ for every $w\in C \setminus \{u,v\}$,
there are three possible cases:
(1) $u \in S$,
(2) $u \notin S$ and $v\in S$
(3) $u,v \notin S$ and $C \setminus \{u,v\} \subseteq S$.

If Case~(3) occurs then $u' \in S$ (due to the path $v,u,x,u'$).
Thus, $S_2 \subseteq S$.
If Case~(2) occurs then $S$ contains either $u'$ or $v'$.
Therefore, $S' = (S\setminus \{u'\}) \cup \{v'\}$ is a \mvc of $G$
and $S_1 \subseteq S'$.

Now suppose that Case~(1) occurs.
If $u' \in S$ or $v' \in S$, the set
$S'' = (S\setminus \{u,u'\}) \cup \{v,v'\}$ is a \mvc of $G$
and $S_1 \subseteq S''$.
Otherwise ($u',v' \notin S$) we have $C' \setminus \{u',v'\} \subseteq S$.
Therefore, $S_3 \subseteq S$.
\end{proof}
The branching vector of Rule~(\currentrule) is at least $(2,2,3)$,
and the branching number is at most 1.619.

\begin{Rule}\label{rule:large-C-2}
If $C$ is a \component{2} of size at least 4,
recurse on $S_i = \{u_i\} \cup (\{v_1,\ldots,v_s\} \setminus \{v_i\})$ for
$i = 1,\ldots,s$,
where $v_1,\ldots,v_s$ are the vertices of $C$ except the center, and
$u_i$ is the unique vertex in $V_2 \setminus \{v_i\}$ that is adjacent to the
unique connection vertex that is adjacent to $v_i$.
See Figure~\ref{fig:large-C-2}.
\end{Rule}
\begin{figure}
\centering
\includegraphics{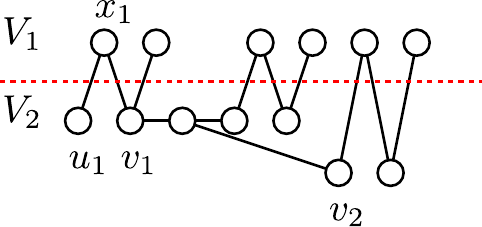}
\caption{An example for Rule~(\ref{rule:large-C-2}).}
\label{fig:large-C-2}
\end{figure}
\begin{lemma}
Rule~(\currentrule) is correct.
\end{lemma}
\begin{proof}
Due to Rule~(\ref{rule:move-to-V1}) and
Observation~\ref{obs:two-connection-vertices},
every $v_i$ is adjacent to a unique connection vertex $x_i$.
Since Rule~(\ref{rule:contains}) and Rule~(\ref{rule:split-3}) cannot be
applied, there is a unique vertex $u_i \neq v_i$ that is adjacent to $x_i$.
Therefore, the definition of Rule~(\currentrule) is valid.
Also note that the vertices $x_1,\ldots,x_s$ are distinct, due to the assumption
that Rule~(\ref{rule:large-intersection}) cannot be applied.
Since Rule~(\ref{rule:large-C}) cannot be applied, for every $i$,
at least one of the vertices $u_i$ and $v_i$ is adjacent to a leaf.

Let $S$ be a \mvc of $G$.
If there is an index $i$ such that $v_i \notin S$, then $u_i \in S$
(since at least one of the vertices $u_i$ and $v_i$ is adjacent to a leaf).
Additionally, $(\{v_1,\ldots,v_s\} \setminus \{v_i\}) \subseteq S$
(for every $j\neq i$, $S$ must contain $v_j$ due to the path $x_i,v_i,v,v_j$,
where $v$ is the center of $S$).
Therefore, $S_i \subseteq S$.
If $v_i \in S$ for all $i$, define $S' = (S\setminus \{v_1\}) \cup \{u_1\}$.
The set $S'$ is a \mvc of $G$ and $S_1 \subseteq S'$.
\end{proof}
The branching vector of Rule~(\currentrule) is $(s,s,\ldots,s)$,
where the value $s$ is repeated $s$ times.
Since $s\geq 3$, the branching number of this rule is at most 1.443.

\begin{Rule}\label{rule:last}
If none of the previous rules is applicable,
let $C$ be a connected component of $G$.
$C$ has the form
$C = C_1 \cup \cdots \cup C_s \cup \{x_1,\ldots,x_s\} \cup L$ where
(1) Each $C_i = \{u_i, v_i, w_i\}$ is a \component{2} which is a star of size~3
with a center $v_i$.
(2) Each $x_i$ is a connection vertex.
$N(x_i) = \{w_i,u_{i+1}\}$ if $i \leq s-1$ and $N(x_s) = \{w_s,u_1\}$.
(3) $L$ is a set of leaves that are adjacent to vertices in
$\{u_1,\ldots,u_s,w_1,\ldots,w_s\}$.
Recurse on $\{u_1,\ldots,u_s\}$.
See Figure~\ref{fig:last}.
\end{Rule}
\begin{figure}
\centering
\includegraphics{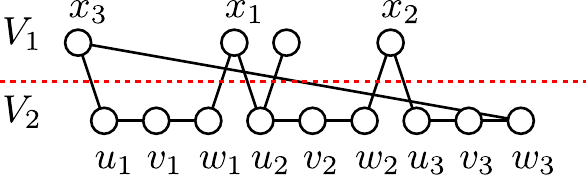}
\caption{An example for Rule~(\ref{rule:last}).}
\label{fig:last}
\end{figure}
\begin{lemma}
Rule~(\currentrule) is correct.
\end{lemma}
\begin{proof}
The set $\{u_1,\ldots,u_s\}$ is a \vc of $G[C]$.
Additionally, every \vc of $G[C]$ must contain at least one vertex from each
$C_i$.
Therefore, $\{u_1,\ldots,u_s\}$ is a \mvc of $G[C]$.
\end{proof}

Since all the branching rules have branching number at most 1.619, it follows
that the algorithm solves the disjoint 4-path vertex cover problem in
$O^*(1.619^k)$ time.
Therefore, there is an $O^*(2.619^k)$-time algorithm for 4-path vertex cover.

\bibliographystyle{plain}
\bibliography{parameterized}

\end{document}